\documentclass[11pt]{article}
\usepackage[T1]{fontenc}
\usepackage{lmodern}
\usepackage[margin=1.05in]{geometry}
\usepackage{amsmath,amssymb,amsthm}
\usepackage{booktabs}
\usepackage[hidelinks]{hyperref}
\hypersetup{pdftitle={A Quaternary Legendre Pair of Length 64},
  pdfauthor={Harshit Verma}}
\newcommand{\ii}{\mathrm{i}}
\DeclareMathOperator{\PAF}{PAF}
\newtheorem{theorem}{Theorem}
\title{A Quaternary Legendre Pair of Length 64}
\author{\normalsize Harshit Verma\\
  \normalsize Department of Computer Science, Yale University\\
  \normalsize\texttt{harshit.verma@yale.edu}}
\date{}

\begin{document}
\maketitle
\begin{abstract}
We construct an explicit quaternary Legendre pair of length $64$.
To the best of our knowledge, this is the first reported construction
at this length.
\end{abstract}

\section{Introduction}

Kotsireas and Winterhof introduced quaternary Legendre pairs and related
them to constructions of quaternary and binary Hadamard matrices
\cite{KW2024}. Their work provided pairs at all even lengths at most
$24$, as well as several larger lengths. Kotsireas, Koutschan and
Winterhof subsequently constructed pairs of lengths $28$, $30$, $32$
and $34$ \cite{KKW2025}. Jedwab and Pender gave two general constructions:
one at length $(q-1)/2$ for every prime power $q\equiv1\pmod4$, and
another at length $2p$ for every odd prime $p$ such that $2p-1$ is a
prime power \cite{JP2025}.

The 2025 lists in \cite{KKW2025,JP2025} left fourteen even lengths at
most $100$ unresolved, including $64$. More recently, Lebedev
\cite{Lebedev2026} supplied pairs at lengths $42$, $46$, $52$, $58$,
$66$, $72$ and $80$. That preprint also reports unsuccessful searches
at lengths $64$, $70$ and $76$, explicitly restricted to particular
recorded search queues; these are not nonexistence results. Thus, to
the best of our knowledge, length $64$ remained unresolved immediately
prior to the present construction.

We give an explicit pair of length $64$ and verify its autocorrelations.

\section{Definitions and conventions}

Let $\mu_4=\{1,-1,\ii,-\ii\}$, where $\ii^2=-1$. For a sequence
$X=(X_0,\ldots,X_{n-1})\in\mu_4^n$, extend the indices modulo $n$
and define its \emph{periodic autocorrelation} by
\begin{equation}\label{eq:paf}
 \PAF(X,k)=\sum_{j=0}^{n-1}X_j\overline{X_{j+k}}
 \qquad (k\in\mathbb Z/n\mathbb Z).
\end{equation}
A \emph{quaternary Legendre pair} of length $n$ is a pair
$(A,B)\in\mu_4^n\times\mu_4^n$ satisfying
\begin{equation}\label{eq:qlp}
 \PAF(A,k)+\PAF(B,k)=-2\qquad(1\leq k<n).
\end{equation}
This is the convention used in \cite{KW2024,KKW2025,JP2025,Lebedev2026}.
The definition does not require either sequence to use all four symbols.
In particular, one sequence in the witness below is binary.

Each term in \eqref{eq:paf} belongs to $\mu_4$, so every
autocorrelation lies in $\mathbb Z[\ii]$. Two identities used below are
\begin{equation}\label{eq:symmetry}
 \PAF(X,0)=n,\qquad
 \PAF(X,n-k)=\overline{\PAF(X,k)}.
\end{equation}
The first follows from $|X_j|=1$; the second follows by conjugating
\eqref{eq:paf} and changing the summation index.

\section{An explicit pair and its verification}

\begin{theorem}\label{thm:main}
There exists a quaternary Legendre pair of length $64$.
\end{theorem}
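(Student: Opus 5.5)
The proof will be constructive: I will exhibit explicit sequences $A\in\mu_4^{64}$ and $B\in\{1,-1\}^{64}$ and verify \eqref{eq:qlp} directly. It is convenient to encode entries as exponents, writing $A_j=\ii^{a_j}$ with $a_j\in\{0,1,2,3\}$ and $B_j=(-1)^{b_j}$. In this encoding each term of \eqref{eq:paf} is $\ii^{a_j-a_{j+k}}$, so $\PAF(A,k)$ is determined by counting, for each residue $r\bmod 4$, how many $j$ satisfy $a_j-a_{j+k}\equiv r$. This reduces verification to integer bookkeeping in $\mathbb Z[\ii]$.

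The verification is organized in three steps. First, by \eqref{eq:symmetry}, $\PAF(X,64-k)=\overline{\PAF(X,k)}$. Since the target value $-2$ is real, it suffices to check $1\le k\le 32$. Second, because $B$ is binary, $\PAF(B,k)$ is a real integer. Condition \eqref{eq:qlp} therefore splits into $\operatorname{Im}\PAF(A,k)=0$ and $\operatorname{Re}\PAF(A,k)+\PAF(B,k)=-2$. At $k=32$ the conjugate symmetry forces both autocorrelations to be real automatically. Third, I will tabulate the 32 values of $\PAF(A,k)$ and $\PAF(B,k)$ and confirm that each row sums to $-2$. This is a finite computation that any reader can reproduce in a few lines of code. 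As a consistency check, I will also compare against the spectral form. By Parseval, \eqref{eq:qlp} together with \eqref{eq:symmetry} is equivalent to $|\hat A(\omega)|^2+|\hat B(\omega)|^2=130$ for every nontrivial $64$th root of unity $\omega$, and the value at $\omega=1$ must be consistent with $|\hat A(1)|^2+|\hat B(1)|^2=128-2\cdot 63+\ldots$.

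The real obstacle is finding the witness, not checking it. A blind search over $4^{64}\cdot 2^{64}$ pairs is hopeless, so I would search in stages. I would use the power spectral density filter: reject any candidate $A$ for which $|\hat A(\omega)|^2>130$ at some $\omega$. I would first enumerate and filter candidates for $A$ and $B$ separately. I would then match them on compressions: for divisors $d\mid 64$ such as $d=2,4,8$, the $d$-compressed sequences must themselves satisfy the compressed Legendre-pair equations. This gives a meet-in-the-middle matching on compressed PAF vectors. To shrink the space, I would impose structure such as a symmetry or multiplier-orbit ansatz, or lift from length-$32$ compressions, and finish with local search such as simulated annealing on the cost $\sum_k|\PAF(A,k)+\PAF(B,k)+2|^2$. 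Whatever heuristic produces the pair, the proof of the theorem itself consists only of the explicit listing and the autocorrelation table.
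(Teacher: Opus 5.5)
\textbf{Gap: no witness is given.} Your proposal never exhibits the pair, and for this theorem the explicit witness is the entire mathematical content. Your verification framework is correct and is essentially the paper's own:
\begin{itemize}
\item reducing to $1\le k\le 32$ via $\PAF(X,64-k)=\overline{\PAF(X,k)}$;
\item observing that a binary sequence has real autocorrelations;
\item counting exponent differences modulo $4$.
\end{itemize}
Even your ansatz of one binary and one genuinely quaternary sequence matches the paper's witness, with the roles of $A$ and $B$ swapped. But all of this only says how a given pair would be checked. It proves nothing until a pair is written down.

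Your search pipeline (PSD filtering, compression matching, simulated annealing) is a heuristic with no guarantee of success. That matters here, because no known infinite family reaches $n=64$. The Jedwab--Pender length $(q-1)/2$ would need $q=129=3\cdot 43$, which is not a prime power. Their length $2p$ would need $p=32$, which is not prime. The paper also notes that earlier recorded searches at length $64$ were unsuccessful. So what you have written is a plan for obtaining a proof, not a proof. The gap closes only when you produce the two $64$-term exponent strings together with the $32$ values of $\PAF(A,k)$ and $\PAF(B,k)$, or an equivalent exact computation. That is precisely what the paper's Tables supply.

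\textbf{Minor point: the value at $\omega=1$.} Your check at $\omega=1$ should not have ``$+\ldots$''. The value is exactly $|\hat A(1)|^2+|\hat B(1)|^2=128-2\cdot 63=2$.

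This gives a useful constraint for your search. If $B$ is binary of even length $64$, then $\sum_j B_j$ is an even integer. So $|\hat B(1)|^2\in\{0,4,16,\ldots\}$, which forces $\sum_j B_j=0$ and hence $|\sum_j A_j|^2=2$. This is consistent with the paper's row sums, $0$ and $1+\ii$.
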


Define $A_j=\ii^{a_j}$ and $B_j=\ii^{b_j}$ for $0\leq j<64$, using
the exponent strings in Table~\ref{tab:witness}. The digit-to-symbol
map is, explicitly,
\[
 0\longmapsto1,\qquad 1\longmapsto\ii,\qquad
 2\longmapsto-1,\qquad 3\longmapsto-\ii.
\]
Each table entry is a string of eight single-digit exponents. Read
each string from left to right in the stated index range, then read
the rows from top to bottom, separately for $a$ and $b$. This specifies
both sequences in their original order, with no cyclic shift or other
change of representative.

\begin{table}[ht]
\centering
\renewcommand{\arraystretch}{1.13}
\begin{tabular}{rcc}
\toprule
Indices $j$ & Exponents $a_j$ & Exponents $b_j$\\
\midrule
% BEGIN WITNESS
0--7 & \texttt{02022220} & \texttt{02332221} \\
8--15 & \texttt{02022222} & \texttt{10130032} \\
16--23 & \texttt{02200020} & \texttt{10300312} \\
24--31 & \texttt{00202220} & \texttt{13202110} \\
32--39 & \texttt{20220002} & \texttt{30112023} \\
40--47 & \texttt{02002000} & \texttt{12130030} \\
48--55 & \texttt{00220022} & \texttt{12300310} \\
56--63 & \texttt{00022022} & \texttt{11222332} \\
% END WITNESS
\bottomrule
\end{tabular}
\caption{The length-$64$ witness in exponent notation.}
\label{tab:witness}
\end{table}

\begin{proof}
For an exponent sequence $e=(e_0,\ldots,e_{63})$, let
\[
 N_e(k,r)=\#\{j\in\{0,\ldots,63\}:
             e_j-e_{j+k}\equiv r\pmod4\}
 \quad (0\leq r<4),
\]
where $j+k$ is reduced modulo $64$. If $X_j=\ii^{e_j}$, then
\begin{equation}\label{eq:counts}
 \PAF(X,k)=N_e(k,0)-N_e(k,2)
       +\ii\bigl(N_e(k,1)-N_e(k,3)\bigr).
\end{equation}
Thus each autocorrelation can be evaluated by counting residues and
subtracting ordinary integers.

Applying \eqref{eq:counts} to the two columns of
Table~\ref{tab:witness} gives Table~\ref{tab:paf}. All imaginary
parts are zero. For example, at $k=1$ the four counts, in the order
$r=0,1,2,3$, are
\[
 (N_a(1,r))_{r=0}^3=(30,0,34,0),\qquad
 (N_b(1,r))_{r=0}^3=(14,19,12,19),
\]
giving $\PAF(A,1)=-4$ and $\PAF(B,1)=2$.

\begin{table}[ht]
\centering
\renewcommand{\arraystretch}{1.08}
\setlength{\tabcolsep}{9pt}
\begin{tabular}{rrr@{\hspace{2.2em}}rrr}
\toprule
$k$ & $\PAF(A,k)$ & $\PAF(B,k)$ &
$k$ & $\PAF(A,k)$ & $\PAF(B,k)$\\
\midrule
% BEGIN AUTOCORRELATIONS
$1$ & $-4$ & $2$ & $17$ & $0$ & $-2$ \\
$2$ & $-4$ & $2$ & $18$ & $-4$ & $2$ \\
$3$ & $0$ & $-2$ & $19$ & $4$ & $-6$ \\
$4$ & $4$ & $-6$ & $20$ & $-12$ & $10$ \\
$5$ & $4$ & $-6$ & $21$ & $0$ & $-2$ \\
$6$ & $0$ & $-2$ & $22$ & $-4$ & $2$ \\
$7$ & $-8$ & $6$ & $23$ & $0$ & $-2$ \\
$8$ & $0$ & $-2$ & $24$ & $0$ & $-2$ \\
$9$ & $4$ & $-6$ & $25$ & $-4$ & $2$ \\
$10$ & $0$ & $-2$ & $26$ & $4$ & $-6$ \\
$11$ & $0$ & $-2$ & $27$ & $0$ & $-2$ \\
$12$ & $-4$ & $2$ & $28$ & $0$ & $-2$ \\
$13$ & $4$ & $-6$ & $29$ & $-4$ & $2$ \\
$14$ & $-4$ & $2$ & $30$ & $-8$ & $6$ \\
$15$ & $4$ & $-6$ & $31$ & $0$ & $-2$ \\
$16$ & $4$ & $-6$ & $32$ & $-8$ & $6$ \\
% END AUTOCORRELATIONS
\bottomrule
\end{tabular}
\caption{Exact autocorrelations for $1\leq k\leq32$.
All entries have zero imaginary part.}
\label{tab:paf}
\end{table}

Every displayed pair of autocorrelations sums to $-2$. For
$33\leq k\leq63$, equation~\eqref{eq:symmetry} reduces the required
identity to the corresponding one at $64-k$. This proves
\eqref{eq:qlp} for all $63$ nonzero shifts.

For a direct check without this symmetry reduction, the Python~3
listing in Appendix~\ref{app:verifier} evaluates \eqref{eq:paf} at
every shift from $0$ through $63$ using the printed exponent strings.
It represents each fourth root of unity by its two integer coordinates
and computes the combined value $(-2,0)$ for every nonzero shift. All operations in both
calculations are exact integer operations; no floating-point
arithmetic or numerical tolerance is involved.
\end{proof}

\section{Consistency checks}

The first sequence contains $32$ entries equal to $1$ and $32$ equal
to $-1$. The numbers of occurrences of $1,\ii,-1,-\ii$ in the second
sequence are, respectively, $17,16,16,15$. Consequently,
\begin{equation}\label{eq:rowsums}
 \sum_{j=0}^{63}A_j=0,\qquad
 \sum_{j=0}^{63}B_j=(17-16)+\ii(16-15)=1+\ii.
\end{equation}
For any sequence $X$ of length $n$, summing its periodic
autocorrelations over all shifts gives
\begin{equation}\label{eq:global}
 \sum_{k=0}^{n-1}\PAF(X,k)
 =\sum_{j=0}^{n-1}\sum_{h=0}^{n-1}X_j\overline{X_h}
 =\left|\sum_{j=0}^{n-1}X_j\right|^2.
\end{equation}
For the present pair, \eqref{eq:rowsums} therefore agrees with the
global sum of the verified autocorrelations:
\[
 \left|\sum_j A_j\right|^2+\left|\sum_j B_j\right|^2
 =0+2=128-2\cdot63=2.
\]
At shift zero each autocorrelation is $64$, as required by
\eqref{eq:symmetry}. These checks supplement the individual nonzero
shift identities.

One may also record the complete calculation as a polynomial identity.
Put $A(z)=\sum_{j=0}^{63}A_jz^j$ and
$B(z)=\sum_{j=0}^{63}B_jz^j$. In
$\mathbb Z[\ii][z,z^{-1}]/(z^{64}-1)$, with conjugation acting on
coefficients only, the verified identity is
\begin{equation}\label{eq:ring}
 A(z)\overline{A(z^{-1})}+B(z)\overline{B(z^{-1})}
 =130-2\sum_{r=0}^{63}z^r.
\end{equation}
Indeed, the coefficient of $z^{-k}$ on the left is
$\PAF(A,k)+\PAF(B,k)$. The right side has constant coefficient $128$
and every other coefficient $-2$. Substituting $z=1$ recovers the
global identity above.

\bibliographystyle{unsrt}
\bibliography{references}

\clearpage
\appendix
\section{An exact verifier}\label{app:verifier}

The following self-contained Python~3 program uses the exponent
strings of Table~\ref{tab:witness} and no external libraries or data
files. In \texttt{roots}, the pair $(u,v)$ represents $u+v\ii$;
reducing an exponent difference modulo $4$ therefore implements
$X_j\overline{X_{j+k}}$ exactly. The program checks the alphabet and
lengths, all $63$ nonzero shifts separately, shift zero, and both row
sums. Run it as \texttt{python3 verify.py}, without Python's
assertion-disabling \texttt{-O} option. Successful execution prints
\texttt{PASS: all 63 nonzero shifts verified exactly.}

\begingroup
\small
\begin{verbatim}
A = ("02022220 02022222 02200020 00202220 "
     "20220002 02002000 00220022 00022022").replace(" ", "")
B = ("02332221 10130032 10300312 13202110 "
     "30112023 12130030 12300310 11222332").replace(" ", "")
roots = ((1, 0), (0, 1), (-1, 0), (0, -1))
assert len(A) == len(B) == 64
assert set(A + B) <= set("0123")

def paf(e, k):
    real = imag = 0
    for j in range(64):
        d = (int(e[j]) - int(e[(j + k) % 64])) % 4
        u, v = roots[d]
        real += u
        imag += v
    return real, imag

for k in range(64):
    ar, ai = paf(A, k)
    br, bi = paf(B, k)
    expected = (128, 0) if k == 0 else (-2, 0)
    assert (ar + br, ai + bi) == expected, k

def row_sum(e):
    return tuple(sum(roots[int(d)][t] for d in e)
                 for t in (0, 1))

assert row_sum(A) == (0, 0)
assert row_sum(B) == (1, 1)
print("PASS: all 63 nonzero shifts verified exactly.")
\end{verbatim}
\endgroup

\end{document}